\pdfoutput=1

\documentclass[letterpaper, 10pt, journal]{IEEEtran}

\IEEEoverridecommandlockouts       

\usepackage{amsmath,amsfonts}
\usepackage{algorithm}
\usepackage{array}
\usepackage[caption=false,font=normalsize,labelfont=sf,textfont=sf]{subfig}
\usepackage{textcomp}
\usepackage{stfloats}
\usepackage{url}
\usepackage{verbatim}
\usepackage{graphicx}
\usepackage{cite}
\usepackage{xcolor}
\usepackage{booktabs}
\usepackage{algpseudocode}

\usepackage{amsthm}
\usepackage{hyperref}
\usepackage{amssymb}
\usepackage{yfonts}
\usepackage{cleveref}

\usepackage[T1]{fontenc}
\usepackage{mathpazo}
\usepackage{microtype}

\usepackage[letterpaper, top=1in, bottom=1in, left=0.875in, right=0.875in]{geometry}
\usepackage{etoolbox}
\patchcmd{\abstract}
  {\bfseries\textit{\abstractname}---\relax}
  {\textbf{\textit{\abstractname}}---\relax}
  {}{\typeout{(preprint restyling) abstract patch skipped}}

\providecommand{\IEEEPARstart}[2]{\PARstart{#1}{#2}}

\makeatletter
\@ifundefined{IEEEkeywords}{%
    {\vspace{0.4\baselineskip}\noindent\small\itshape\textbf{\textit{Index Terms}}---}%
    {\par\normalsize}%
}{}
\makeatother

\hypersetup{pdfauthor={},pdfsubject={},pdfkeywords={}}

\definecolor{prsapphire}{HTML}{1F4E8C}
\definecolor{prpine}{HTML}{0B6E4F}
\definecolor{prburgundy}{HTML}{8C2F39}
\hypersetup{colorlinks=true, linkcolor=prsapphire, citecolor=prpine, urlcolor=prburgundy}

\definecolor{prink}{HTML}{22282F}      
\definecolor{prcloud}{HTML}{F4F5F7}    
\definecolor{prcomment}{HTML}{9AA3AD}  
\definecolor{prlineno}{HTML}{AEB6BF}   

\usepackage[most]{tcolorbox}

\newtheoremstyle{prsober}{2pt}{2pt}{\itshape}{}{\bfseries}{.}{ }{\thmname{#1}\thmnumber{ #2}\thmnote{ {\mdseries\itshape(#3)}}}
\theoremstyle{prsober}

\newtheorem{theorem}{Theorem}
\newtheorem{proposition}{Proposition}
\newtheorem{corollary}{Corollary}
\newtheorem{assumption}{Assumption}

\tcbset{prstatement/.style={enhanced jigsaw, breakable, frame hidden,
  boxrule=0pt, sharp corners, colback=prcloud,
  borderline west={1.4pt}{0pt}{prink},
  left=7pt, right=6pt, top=4pt, bottom=4pt,
  before skip=8pt plus 2pt, after skip=8pt plus 2pt}}
\tcolorboxenvironment{theorem}{prstatement}
\tcolorboxenvironment{corollary}{prstatement}
\tcolorboxenvironment{proposition}{prstatement}
\tcolorboxenvironment{lemma}{prstatement}
\tcolorboxenvironment{assumption}{prstatement}
\tcolorboxenvironment{remark}{enhanced jigsaw, breakable, frame hidden,
  boxrule=0pt, sharp corners, colback=prcloud,
  left=7pt, right=6pt, top=4pt, bottom=4pt,
  before skip=8pt plus 2pt, after skip=8pt plus 2pt}

\tcolorboxenvironment{abstract}{enhanced jigsaw, frame hidden, boxrule=0pt,
  sharp corners, colback=prcloud, borderline west={1.4pt}{0pt}{prink},
  left=7pt, right=6pt, top=5pt, bottom=5pt, before skip=6pt, after skip=8pt}

\makeatletter
\long\def\@makecaption#1#2{%
  \ifx\@captype\@IEEEtablestring%
    \footnotesize\bgroup\par\centering
    {\normalfont\footnotesize\scshape #1}\\[0.25ex]
    {\normalfont\footnotesize #2}\par\egroup
    \vskip\belowcaptionskip
  \else
    \vskip\abovecaptionskip
    \setbox\@tempboxa\hbox{\normalfont\footnotesize{\bfseries#1.}\enspace #2}%
    \ifdim \wd\@tempboxa >\hsize
      {\normalfont\footnotesize{\bfseries#1.}\enspace #2\par}%
    \else
      \hbox to\hsize{\hfil\box\@tempboxa\hfil}%
    \fi
  \fi}
\makeatother

\makeatletter
\newcommand\fs@prruled{\def\@fs@cfont{\bfseries}\let\@fs@capt\floatc@ruled
  \def\@fs@pre{{\color{prink}\hrule height0.8pt depth0pt}\kern3pt}%
  \def\@fs@post{\kern2.5pt{\color{prink}\hrule height0.8pt}\relax}%
  \def\@fs@mid{\kern2.5pt{\color{prink!35}\hrule height0.5pt}\kern3pt}%
  \let\@fs@iftopcapt\iftrue}
\makeatother
\floatstyle{prruled}
\restylefloat{algorithm}
\tcolorboxenvironment{algorithmic}{enhanced, frame hidden, boxrule=0pt,
  sharp corners, colback=prcloud, left=2pt, right=2pt, top=4pt, bottom=4pt,
  before skip=2pt, after skip=2pt}
\algrenewcommand\alglinenumber[1]{{\footnotesize\color{prlineno}#1:}}
\algrenewcommand\algorithmiccomment[1]{\hfill{\footnotesize\itshape\color{prcomment}\(\triangleright\) #1}}

\newcommand{\parhead}[1]{\textbf{#1}}

\newcommand{\diff}{\mathop{}\!{\mathrm{d}}}

\newcommand{\T}{{\top}}
\newcommand{\R}{\mathbb{R}}
\newcommand{\E}{\mathbb{E}}
\newcommand{\N}{\mathbb{N}}

\DeclareMathOperator*{\argmin}{arg\,min}

\begin{document}

\title{\LARGE \bf
Trajectory Optimization via Schr\"{o}dinger Bridge Sampling%
}

\author{Mattia Mosso$^{1,*}$ ~~ Yang Liu$^{2,*}$ ~~ Heng Yang$^{3}$
\thanks{$^*$ Equal contribution.}
\thanks{$^{1}$ School of Aerospace Engineering, Georgia Institute of Technology. Email: \texttt{mmosso3@gatech.edu}}
\thanks{$^{2}$ College of Science and Engineering, University of Minnesota. Email: \texttt{liu03222@umn.edu}}
\thanks{$^{3}$ School of Engineering and Applied Sciences, Harvard University. Email: \texttt{hankyang@seas.harvard.edu}}%
}
\maketitle

\begin{abstract}
We take a new look at the relation between finite-horizon trajectory optimization and Schr\"odinger bridge sampling. Viewed as inference, KL-regularized trajectory optimization is solved by sampling from a Gibbs--Boltzmann distribution whose energy is the trajectory cost, and the adjoint Schr\"odinger bridge sampler (ASBS) is a simulation-free diffusion sampler designed for exactly such unnormalized targets.
Hard equality path and terminal constraints, by contrast, confine the admissible decision variables to a measure-zero feasibility manifold, on which the target must be redefined intrinsically. 
In particular: $(a)$ we analyze two complementary parametrizations; a rollout parametrization, in which the dynamics are eliminated and only the remaining constraints shape the manifold, and a double-shooting parametrization, in which states and controls are sampled jointly and the dynamics themselves become part of the manifold; $(b)$ we establish regularity conditions under which both admissible sets are smooth embedded manifolds; $(c)$ under compactness and path-connectedness assumptions, we sample from the resulting intrinsic Gibbs measures via Riemannian ASBS, treating strict inequalities through exponential slack variables. Experiments, including contact-rich locomotion and manipulation, demonstrate the effectiveness of both regimes.
\end{abstract}
\section{Introduction} \label{sec: intro}
\IEEEPARstart{T}{his} paper concerns two problems of rather different origin. \emph{Trajectory optimization} seeks a state--control trajectory of minimal cost subject to dynamics and task constraints, and is a computational workhorse of model-based robotics. The \emph{Schr\"odinger bridge problem} \cite{leonard2014survey} seeks the controlled diffusion that most economically transports a tractable source distribution onto a prescribed target, and has recently matured into a family of practical diffusion samplers \cite{liu2023adjoint}. Our aim is to make the relation between the two precise, and then to exploit it. Concretely, consider a deterministic, nonlinear, time-varying, discrete-time dynamical system
\begin{equation}
    x_{t+1} = f_t(x_t, u_t), \quad t \in \N_{[0:T-1]},
    \label{eq: discrete_dyn}
\end{equation}
where $x_t \in \R^n$ is the state and $u_t \in \R^m$ is the control input. For a finite horizon $T$, a state-control trajectory is denoted by
\begin{equation}
    \tau = (x_0, u_0, x_1, u_1, \dots, x_{T-1}, u_{T-1}, x_T) \in \R^{n_\tau},
    \label{eq: tau}
\end{equation}
with $n_\tau := (T+1)n + Tm$.
We study the finite-horizon trajectory optimization problem, formally defined as
\begin{subequations} \label{eq: constrained_TO}
\begin{align}
    \min_{\tau} \ \ &\mathcal{J}(\tau) := \sum_{t=0}^{T-1} \ell_t (x_t, u_t) + \ell_T (x_T) \\
    \text{s.t.} \ \ &\eqref{eq: discrete_dyn}, \quad x_0 = \bar{x}_0, \\
    &h_t(x_t, u_t) = 0, \ t \in \N_{[0:T-1]}, \quad h_T(x_T) = 0, \label{eq: equality_constraints}
\end{align}
\end{subequations}
where $\ell_t(\cdot, \cdot) \ge 0$ and $\ell_T(\cdot) \ge 0$ are running and terminal costs, the smooth maps $h_t: \R^n \times \R^m \to \R^{r_t}$ and $h_T: \R^n \to \R^{r_T}$ encode, e.g., exact goal reaching, waypoint passage, actuation geometry, or safety specifications (strict inequalities are treated in Sec.~\ref{subsec: assumptions_inequalities} via slack variables). We distinguish two regimes of \eqref{eq: constrained_TO} throughout: the \emph{soft-constrained} regime, in which the maps in \eqref{eq: equality_constraints} are absent or their violation is merely penalized within $\mathcal{J}$, and the \emph{hard-constrained} regime, in which \eqref{eq: equality_constraints} must hold exactly. 

Gradient-based solvers, from DDP/iLQR to constrained direct-transcription NLP methods \cite{jacobson1970differential, li2004iterative, betts2010practical}, are local---they refine a single initial guess into one locally optimal trajectory---
although constrained formulations can enforce \eqref{eq: equality_constraints} at convergence.
Sampling-based approaches such as model predictive path integral control (MPPI) \cite{williams2017mppi} explore the control space more globally, parallelize naturally, and query the dynamics model only to evaluate trajectory costs, thereby avoiding dynamics derivatives.
These methods, however, together with more recent inference-based solvers \cite{yang2026tsmc}, handle \eqref{eq: equality_constraints} through \emph{soft} penalties, so that feasibility is traded against cost and never formally guaranteed. Two questions therefore structure this paper: \emph{what, precisely, is the relation between trajectory optimization and diffusion-based sampling}, and \emph{can the global, distributional exploration of the latter be retained while satisfying hard constraints exactly?}

The starting point is a characterization of trajectory optimization as inference \cite[Thm.~1]{yang2026tsmc}: instead of a single decision variable $\Theta$ (here, a control sequence or a full trajectory), one optimizes a distribution over decision variables, regularized by the Kullback--Leibler (KL) divergence to a prior $p_0$.
The optimal distribution admits the closed-form expression
\begin{equation}
    p^\star(\Theta) = \frac{1}{Z} p_0(\Theta) \exp \left(-\frac{1}{\lambda} E(\Theta)\right),
    \label{eq: p_star}
\end{equation}
where $E$ is an energy induced by $\mathcal{J}$ and $\lambda > 0$ a temperature, so that $p^\star$ concentrates on minimizers as $\lambda \to 0^+$. Generating draws from the unnormalized density \eqref{eq: p_star} is precisely the problem addressed by ASBS \cite{liu2023adjoint}.
In the hard-constrained regime, by contrast, the admissible decision variables form a \emph{feasibility manifold}, a measure-zero subset of the ambient space.
This motivates adopting the Riemannian Adjoint Schr\"odinger Bridge sampler (R--ASBS) \cite{mosso2026hardconstrained}
as the sampling engine for the hard-constrained case, for which we develop the supporting theory.

\parhead{Contributions.} $(i)$ We investigate the relation between trajectory optimization and Schr\"odinger bridge sampling. In rollout coordinates, soft-constrained problems reduce to Euclidean Gibbs--Boltzmann sampling solved directly by ASBS, while hard-constrained problems become Gibbs--Boltzmann sampling on feasibility manifolds (Thm.~\ref{theo: optimization_inference_manifold}) solved by R--ASBS. $(ii)$ For the hard-constrained case, we establish regularity conditions (Props.~\ref{prop: regularity_S} and~\ref{prop: regularity_Mp}) under which the corresponding feasible sets are smooth embedded manifolds; compactness and path-connectedness are imposed as standing assumptions matching the R--ASBS hypotheses; strict inequalities are handled through exponential slacks. $(iii)$ We demonstrate both regimes on diverse robotic tasks (Sec.~\ref{sec: numerics}).

\section{Related Work}
\label{app: related_work}
\parhead{Optimization-based trajectory optimization.}
Beyond the local solvers recalled in Sec.~\ref{sec: intro}, the numerical machinery for \eqref{eq: constrained_TO} comprises interior-point \cite{wachter2006implementation}, sequential-quadratic \cite{gill2005snopt}, and successive-convex \cite{mao2016successive,li25rss-crisp} programming, together with augmented-Lagrangian methods \cite{howell2019altro}, in shooting or direct-transcription form. Global alternatives, such as mixed-integer programming \cite{marcucci2024shortest} and convex relaxations \cite{kang2024fast}, offer optimality certificates at a computational cost that grows quickly with problem size.
 
\parhead{Sampling-based trajectory optimization.}
Originating in path-integral control \cite{kappen2005linear, theodorou2010generalized}, this family includes MPPI, predictive sampling \cite{howell2022predictive}, cross-entropy \cite{kobilarov2012cross} and Stein variational \cite{lambert2020stein} methods, model-based diffusion \cite{pan2024model}, and tempered sequential Monte Carlo (TSMC) \cite{yang2026tsmc}, whose inference formulation \eqref{eq: p_star} underlies Sec.~\ref{sec: manifold_TO}. These methods generally encode task constraints through costs; recent constrained Stein variants instead enforce them explicitly \cite{power2024constrained,li2026steinsqp}.
 
\parhead{Diffusion models for planning and control.}
Diffusion models trained on demonstrations synthesize trajectories \cite{janner2022diffuser}, policies \cite{chi2023diffusion}, and dynamics \cite{qi26ral-gpc}, with constraints imposed through guidance or projection steps \cite{christopher2024constrained}; their applicability therefore rests on the availability of trajectory datasets.
 
\parhead{Constrained and manifold sampling.}
Sampling on implicitly defined manifolds is classically performed with constrained Hamiltonian Monte Carlo \cite{brubaker2012family}, which is asymptotically exact but suffers from slow mixing and requires a constraint solve at every step of the chain. Mirror, reflected, and barrier constructions \cite{liu2023mirror, lou2023reflected, fishman2023diffusion} address constrained generative modeling from data.

\section{Adjoint Schr\"odinger Bridge Sampling}
\label{sec: preliminaries}
We review background on the sampling machinery of \cite{liu2023adjoint, mosso2026hardconstrained}. Throughout, $\langle \cdot, \cdot \rangle$ and $\|\cdot\|$ denote the Euclidean inner product and norm on the ambient space $\R^d$, respectively.

\subsection{Euclidean ASBS}
\label{subsec: euclidean_ASBS}
Given a continuously differentiable energy $E: \R^d \to \R$, the objective is to sample from the Gibbs--Boltzmann distribution
\begin{equation}
    \nu(\diff x) = \frac{1}{Z}\exp(-E(x))\diff x, \quad x \in \R^d,
    \label{eq: pi_Rd}
\end{equation}
without evaluating the normalizing constant $Z$. Diffusion samplers learn a control field $u_t^\theta$ driving the It\^o stochastic differential equation (SDE)
\begin{equation}
    \diff X_t = \sigma_t u_t^\theta (X_t) \diff t + \sigma_t \diff W_t, \quad X_0 \sim \mu,
    \label{eq: flat_controlled_sde}
\end{equation}
so that $X_1 \sim \nu$, where $\mu$ is a tractable source and $\sigma_t: [0,1] \to \R_+$ a noise schedule\footnote{We restrict the exposition to the zero-drift reference process.}. Let $p^u$ and $p^{\mathrm{base}}$ denote the path measures of the controlled and uncontrolled ($u \equiv 0$) processes. The Schr\"odinger bridge (SB) problem \cite{leonard2014survey} seeks the minimum-energy control---equivalently, by Girsanov's theorem, the minimum Kullback--Leibler divergence $D_{\mathrm{KL}}(p^u \Vert p^{\mathrm{base}})$---achieving the terminal constraint,
\begin{equation}
    \min_{u} \ \E_{p^u} \Big[ \int_0^1 \tfrac12 \|u_t(X_t)\|^2 \diff t \Big] \quad \text{s.t.} \quad X_1 \sim \nu.
    \label{eq: SB_flat}
\end{equation}
Notably, \eqref{eq: SB_flat} is equivalent \cite[Thm.~3.1]{liu2023adjoint} to the unconstrained terminal-cost stochastic optimal control (SOC) problem
\begin{equation}
    \min_{u} \ \E_{p^u} \Big[ \int_0^1 \tfrac12 \|u_t(X_t)\|^2 \diff t + \log \frac{\hat{\varphi}_1 (X_1)}{\nu(X_1)} \Big],
    \label{eq: flat_SB_SOC}
\end{equation}
where $\varphi_t, \hat{\varphi}_t$ are the SB potentials, satisfying $\varphi_1 \hat{\varphi}_1 = \nu$ and $\varphi_0 \hat{\varphi}_0 = \mu$, and the optimal control is $u_t^\star = \sigma_t \nabla \log \varphi_t$. ASBS \cite{liu2023adjoint} solves \eqref{eq: flat_SB_SOC} by alternating two simulation-free regressions. \emph{Adjoint matching} \cite{domingoenrich2025adjoint} exploits the additivity of the Gaussian reference kernel to characterize the optimal control as
\begin{equation}
    u^\star \!=\! \argmin_{u} \ \E \big[ \|u_t(X_t) + \sigma_t (\nabla E + \nabla \log \hat{\varphi}_1) (X_1)\|^2 \big],
\label{eq: ASBS_u_star}
\end{equation}
where the expectation is over $p_{t \vert 0,1}^{\mathrm{base}}\, p_{0,1}^{\bar{u}}$ with $\bar{u} = \texttt{stopgrad}(u)$: endpoints $(X_0, X_1)$ are drawn from the current controlled process $p_{0,1}^{\bar{u}}$, and intermediate states from the (Gaussian) base bridge $p_{t\vert 0,1}^{\mathrm{base}}$. \emph{Corrector matching} estimates the endpoint bias term $\nabla \log \hat{\varphi}_1$, induced by a general non-memoryless source $\mu$, via
\begin{equation}
    \nabla \log \hat{\varphi}_1 = \argmin_h \E_{p_{0,1}^{u}} \Big[ \Big\| h(X_1) + \frac{X_1 - X_0}{\int_0^1 \sigma_t^2 \diff t} \Big\|^2 \Big].
    \label{eq: euclidean_corrector}
\end{equation}
Alternating \eqref{eq: ASBS_u_star} and \eqref{eq: euclidean_corrector} converges to the SB solution when each stage reaches its critical point \cite[Thm.~3.2]{liu2023adjoint}. Crucially, neither differentiation through simulated trajectories nor samples from $\nu$ are required. App.~\ref{app: algorithms} summarizes the resulting procedure.

\subsection{Riemannian ASBS: Intrinsic Sampling on $\mathcal{M}$}
\label{subsec: riemannian_ASBS}
We now summarize the manifold extension \cite{mosso2026hardconstrained} of this machinery, which will form the basis of our hard-constrained formulation.
Specifically, consider
\begin{equation}
    \mathcal{M} := \{x \in \R^d : c(x) = 0\}, \qquad c: \R^d \to \R^{d_c},
    \label{eq: M_constraint}
\end{equation}
with $c$ smooth and $J_c(x) \in \R^{d_c \times d}$ of full row rank on $\mathcal{M}$, so that $\mathcal{M}$ is a smooth, boundaryless embedded submanifold of dimension $d - d_c$. We further assume $\mathcal{M}$ is compact and path-connected. Let $T_x\mathcal{M}$ denote the tangent space at $x$, $P_x := I_d - J_c(x)^\T (J_c J_c^\T)^{-1} J_c(x)$ the orthogonal projector onto $T_x\mathcal{M}$, $\nabla_{\mathcal{M}}$ the Riemannian gradient, and $\diff \mathrm{vol}_{\mathcal{M}}$ the volume form induced by the ambient metric. The intrinsic target is
\begin{equation}
    \nu_{\mathcal{M}}(\diff x) = \frac{1}{Z_{\mathcal{M}}} \exp(-E(x)) \diff \mathrm{vol}_{\mathcal{M}}(x), \quad x \in \mathcal{M}.
    \label{eq: pi_M}
\end{equation}
Define the controlled diffusion \emph{intrinsically} on $\mathcal{M}$,
\begin{equation}
\begin{split}
    \quad \diff X_t = \sigma_t &u_t(X_t) \diff t + \sigma_t \diff W_t^{\mathcal{M}}, \\ 
    X_0 \sim &\mu_{\mathcal{M}}, \ u_t(x) \in T_x \mathcal{M},
    \label{eq: manifold_controlled_sde}
\end{split}
\end{equation}
where $W_t^{\mathcal{M}}$ is Brownian motion on $\mathcal{M}$ (with generator $\frac12 \Delta_{\mathcal{M}}$) and $\mu_{\mathcal{M}}$ an easy-to-sample probability distribution on $\mathcal{M}$. Since drift and noise act through the tangent bundle, feasibility is enforced at the level of the state space ($X_0 \in \mathcal{M} \implies X_t \in \mathcal{M}$ for all $t$, almost surely).

The ASBS structure transfers to $\mathcal{M}$ \cite[Thm.~1]{mosso2026hardconstrained}: the corrector $\nabla_{\mathcal{M}} \log \hat{\varphi}_1^{\mathcal{M}}$ again removes the bias of a non-memoryless source, but is now an intrinsic tangent field.
The exact manifold analogue of \eqref{eq: ASBS_u_star} is more delicate. On a curved space, differentiating the diffusion semigroup replaces the identity transport of $\R^d$ with the adjoint of a damped stochastic parallel transport $\mathcal{W}_{t,1}$ along reference paths.

Since heat kernels, reference bridges, and damped transports are intractable on general manifolds, R--ASBS replaces them with computable geometric surrogates: a short-time Varadhan/Log-map approximation of the heat-kernel score, a noisy-geodesic approximation of the reference bridge, and Levi--Civita parallel transport along the sampled bridge.

However, the manifolds arising in trajectory optimization are implicitly defined by nonlinear equality constraints, for which those primitives are unavailable. \emph{Extended R--ASBS}~\cite[Alg.~2]{mosso2026hardconstrained} replaces them with three projection-based operations\footnote{Owing to these geometric approximations, formal convergence guarantees for R--ASBS are not available.}~: $(i)$ the retraction is the nearest-point projection $\Pi_{\mathcal{M}}$ onto $c(x) = 0$, computed by Newton iterations;
$(ii)$ backward parallel transport of the terminal adjoint is simplified by projection-as-transport (PAT), the sequential composition of tangent projectors along the discretized path, $\mathcal{T}_{1 \to t_0}(v) \approx P_{X_{t_0}} P_{X_{t_1}} \cdots P_{X_{t_{N-1}}} v$; $(iii)$ the corrector target is the projected chord $-P_{X_1}(X_1 - X_0)/\int_0^1 \sigma_t^2 \diff t$, i.e., the tangent projection of the Euclidean corrector in \eqref{eq: euclidean_corrector}. The complete procedure is given in Alg.~\ref{alg: rasbs}, App.~\ref{app: algorithms}.

\section{Trajectory Optimization via Schr\"odinger Bridge Sampling}
\label{sec: manifold_TO}
This section makes the relation announced in Sec.~\ref{sec: intro} clear. Sec.~\ref{subsec: rollout_coordinates} introduces rollout coordinates and shows that, in the soft-constrained regime, \eqref{eq: constrained_TO} reduces to a Euclidean Gibbs--Boltzmann sampling problem solved verbatim by ASBS. The remainder of the section develops the hard-constrained regime, in which the decision variable is confined to a manifold of the form \eqref{eq: M_constraint}.
All formal proofs are collected, with additional remarks, in App.~\ref{app: proofs}.

\subsection{Rollout Coordinates and the Soft-Constrained Case}
\label{subsec: rollout_coordinates}
With the initial state fixed at $\bar{x}_0$, the natural minimal decision variable
is the open-loop control sequence $z := (u_0, \dots, u_{T-1}) \in \R^{Tm}$. The dynamics are eliminated through the rollout map $\Phi: \R^{Tm} \to \R^{n_\tau}$, which recursively sets $x_{t+1} = f_t(x_t, u_t)$ from $x_0 = \bar{x}_0$ and returns the full trajectory $\tau = \Phi(z)$.

In the soft-constrained regime ($h_t, h_T$ absent, or penalized within the cost), the decision variable is fully Euclidean, and the KL-regularized problem behind \eqref{eq: p_star} takes, for a strictly positive prior density $p_0$ on $\R^{Tm}$, the explicit form
\begin{equation}
    \min_{p \in \mathcal{P}(\R^{Tm})} \ \E_{z \sim p}\left[\mathcal{J}(\Phi(z))\right] + \lambda\, D_{\mathrm{KL}}(p \Vert p_0),
    \label{eq: KL_euclidean}
\end{equation}
with $\mathcal{P}(\R^{Tm})$ the Borel probability measures on $\R^{Tm}$. Since $\mathcal{J} \ge 0$ and $p_0$ is a strictly positive density, the partition function $Z_\lambda := \int_{\R^{Tm}} p_0(z)\, e^{-\frac{1}{\lambda} \mathcal{J}(\Phi(z))}\, \diff z \in (0, 1]$, and \eqref{eq: KL_euclidean} admits the unique minimizer \cite[Thm.~1]{yang2026tsmc}
\begin{equation}
    p^\star_\lambda(z) = \frac{1}{Z_\lambda}\, p_0(z)\, e^{-\frac{1}{\lambda} \mathcal{J}(\Phi(z))},
    \label{eq: euclidean_TO_target}
\end{equation}
the instance of \eqref{eq: p_star} with $\Theta = z$. Setting
\begin{equation}
    \tilde{E}(z) := \frac{1}{\lambda}\, \mathcal{J}(\Phi(z)) - \log p_0(z)
    \label{eq: euclidean_TO_energy}
\end{equation}
identifies \eqref{eq: euclidean_TO_target} with the unnormalized target \eqref{eq: pi_Rd}, i.e., trajectory optimization reduces, with no further structure, to the sampling problem of Sec.~\ref{subsec: euclidean_ASBS}.
ASBS alternates \eqref{eq: ASBS_u_star} and \eqref{eq: euclidean_corrector}, touching the dynamics only through rollout derivatives of $\tilde{E}$ in the adjoint-matching target\footnote{A surrogate gradient serves where the simulator is not differentiable.}, and its samples concentrate on low-cost control sequences as $\lambda \to 0^+$.

\subsection{Hard Constraints}
\label{subsec: hard_rollout}
Hard constraints change the geometry. Stacking the path and terminal maps along the rollout defines $G: \R^{Tm} \to \R^{R}$,
\begin{equation}
G(z) =
\begin{bmatrix}
h_0(\bar{x}_0, u_0)\\
\vdots\\
h_{T-1}(x_{T-1}(z), u_{T-1})\\
h_T(x_T(z))
\end{bmatrix},
\label{eq: G_def}
\end{equation}
with $R := \sum_{t=0}^{T-1} r_t + r_T$, and the feasible set in rollout coordinates is $\mathcal{X} := G^{-1}(0) \subset \R^{Tm}$. Note that $G$ contains no dynamics rows; the dynamics are satisfied by construction, and only the remaining constraints shape the manifold.

\begin{proposition}[Regularity of the rollout feasible set]
\label{prop: regularity_S}
Let $f_t$, $h_t$ ($t = 0, \dots, T-1$), and $h_T$ be smooth. Assume $\mathcal{X} \neq \emptyset$ and $\operatorname{rank} J_G(z) = R$ for all $z \in \mathcal{X}$. Then $\mathcal{X}$ is a smooth embedded submanifold of $\R^{Tm}$ of dimension $Tm - R$.
\end{proposition}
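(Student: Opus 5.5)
The plan is to reduce the statement to the regular level set theorem (the preimage theorem, a consequence of the implicit function theorem). Two facts need checking. First, $G:\R^{Tm}\to\R^R$ is smooth, that is $C^\infty$. Second, $0$ is a regular value of $G$ when we restrict attention to $\mathcal{X}=G^{-1}(0)$. The second fact is exactly the rank hypothesis, since $J_G(z)\in\R^{R\times Tm}$ has full row rank $R$, so $dG_z$ is surjective at every $z\in\mathcal{X}$. The theorem then states that $\mathcal{X}$ is a smooth embedded submanifold of dimension $Tm-R$. Nonemptiness keeps the statement from being vacuous. Full row rank also forces $R\le Tm$, so the dimension is nonnegative.

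Smoothness of $G$ is the only step needing an argument, and I will prove it by induction along the horizon. The state maps are $x_0(z)=\bar x_0$, which is constant, and $x_{t+1}(z)=f_t(x_t(z),u_t)$. Assume $z\mapsto x_t(z)$ is smooth. The map $u_t$ is a linear coordinate projection of $z$, hence smooth. Then $x_{t+1}$ is a composition of the smooth $f_t$ with the smooth map $z\mapsto(x_t(z),u_t)$, so it is smooth. This shows that every $x_t(\cdot)$, and hence the rollout $\Phi$, is smooth. Each block $h_t(x_t(z),u_t)$ and $h_T(x_T(z))$ of \eqref{eq: G_def} is then a composition of smooth maps. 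Since $G$ is a finite stack of these blocks, it is smooth.

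The remaining step is to apply the preimage theorem locally, and here I will spell out why the rank condition suffices even though it is imposed only on $\mathcal{X}$ and not on all of $\R^{Tm}$. Fix $z_0\in\mathcal{X}$. Full rank is an open condition, because some $R\times R$ minor is nonzero and stays nonzero nearby by continuity. So there is a neighborhood $U$ of $z_0$ on which $dG$ is surjective. On $U$, permute coordinates so that the columns of that minor come first, and write $z=(a,b)$ with $a\in\R^R$ and $b\in\R^{Tm-R}$. The implicit function theorem then gives a smooth $g$ with $\mathcal{X}\cap U'=\{(g(b),b)\}$ on a smaller neighborhood $U'$. This graph chart is a slice chart, and the collection of such charts makes $\mathcal{X}$ an embedded submanifold with tangent space $T_z\mathcal{X}=\ker J_G(z)$. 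There is no real obstacle in this proof. The one point needing care is the induction showing that the dynamics elimination through $\Phi$ preserves smoothness, because $G$ has no dynamics rows and all dependence on $f_t$ enters through composition.
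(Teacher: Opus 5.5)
Your proposal is correct and matches the paper's proof: you show $G$ is smooth because the rollout states are finite compositions of smooth maps, note that the rank hypothesis makes $0$ a regular value of $G$, and conclude by the regular value theorem that $\mathcal{X}$ is embedded of codimension $R$. Your extra slice-chart argument via the implicit function theorem just unpacks the regular value theorem, which the paper invokes directly.
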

The rank requirement is not automatic\footnote{Inconsistent, redundant, or singular path and terminal constraints may violate it.} and couples all stages through the rollout; a more tractable stagewise sufficient condition is given in Cor.~\ref{cor: sufficient_regular_terminal}, App.~\ref{app: proofs}.

Since $\mathcal{X}$ has codimension $R \ge 1$, it is a Lebesgue-null subset of $\R^{Tm}$
The KL-regularized problem must therefore be re-posed on $\mathcal{M}$. In the Euclidean problem \eqref{eq: KL_euclidean}, the structural role of the full-support prior $p_0$ is to render the tilted density normalizable on the ambient space. On a compact manifold (cf.\ Asm.~\ref{ass: compact_connected}) this role is superfluous and we regularize directly toward the uniform law,
\begin{equation}
    \min_{p \in \mathcal{P}(\mathcal{M})} \ \E_{x \sim p}[E(x)] + \lambda\, D_{\mathrm{KL}}(p \Vert \bar{\mu}_{\mathcal{M}}),
    \label{eq: KL_manifold}
\end{equation}
where $\mathcal{M}$ is a generic manifold of the form \eqref{eq: M_constraint} and $\bar{\mu}_{\mathcal{M}} := \mathrm{vol}_{\mathcal{M}}(\mathcal{M})^{-1}\, \mathrm{vol}_{\mathcal{M}}$ the uniform law. We then have the following result.

\begin{theorem}[Optimization as inference on $\mathcal{M}$]
\label{theo: optimization_inference_manifold}
Let $\mathcal{M} \subset \R^d$ be a smooth, compact, boundaryless embedded submanifold, with volume measure $\mathrm{vol}_{\mathcal{M}}$ induced by the ambient metric and finite total volume $V_{\mathcal{M}} := \mathrm{vol}_{\mathcal{M}}(\mathcal{M}) \in (0, \infty)$. Let $E: \mathcal{M} \to \R$, $E \in C^1(\mathcal{M})$ denote the trajectory cost expressed in the chosen coordinates, and $\lambda > 0$. Then:
\begin{enumerate}
    \item[(i)] \emph{(Well-posedness.)} The partition function $Z_{\mathcal{M}} := \int_{\mathcal{M}} \exp(-\frac{1}{\lambda} E(x))\, \diff\mathrm{vol}_{\mathcal{M}}(x)$ satisfies $0 < Z_{\mathcal{M}} < \infty$.
    \item[(ii)] \emph{(Gibbs--Boltzmann form.)} Problem \eqref{eq: KL_manifold} has the unique minimizer
    \begin{equation}
        \nu_{\mathcal{M}}(\diff x) = \frac{1}{Z_{\mathcal{M}}}\, e^{-\frac{1}{\lambda} E(x)}\, \diff\mathrm{vol}_{\mathcal{M}}(x).
        \label{eq: nu_star_manifold}
    \end{equation}
    \item[(iii)] \emph{(Low-temperature concentration.)} Let $\mathcal{M}^\star := \argmin_{x \in \mathcal{M}} E(x)$, which is nonempty by compactness. For every open set $U \supseteq \mathcal{M}^\star, \ \nu_{\mathcal{M}}(\mathcal{M} \setminus U) \to 0$ as $\lambda \to 0^+$.
\end{enumerate}
\end{theorem}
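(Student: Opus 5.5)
The plan is to prove the three parts in order, using only compactness of $\mathcal{M}$ and continuity of $E$. For \emph{(i)}, $E\in C^1(\mathcal{M})$ is continuous on the compact set $\mathcal{M}$, so it attains a minimum $E_{\min}$ and a maximum $E_{\max}$. The integrand $e^{-E/\lambda}$ is therefore continuous and satisfies $e^{-E_{\max}/\lambda}\le e^{-E/\lambda}\le e^{-E_{\min}/\lambda}$. Integrating against $\mathrm{vol}_{\mathcal{M}}$ gives
\[
V_{\mathcal{M}}\,e^{-E_{\max}/\lambda}\;\le\; Z_{\mathcal{M}} \;\le\; V_{\mathcal{M}}\,e^{-E_{\min}/\lambda},
\]
and both bounds are finite and positive because $V_{\mathcal{M}}\in(0,\infty)$.

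For \emph{(ii)}, I would use the standard Gibbs variational identity. Restrict to $p\ll\bar\mu_{\mathcal{M}}$, since otherwise $D_{\mathrm{KL}}=+\infty$ and such $p$ cannot be minimizers. The measure $\nu_{\mathcal{M}}$ has density $V_{\mathcal{M}}e^{-E/\lambda}/Z_{\mathcal{M}}$ with respect to $\bar\mu_{\mathcal{M}}$, and this density is bounded above and away from zero by (i). Hence $p\ll\bar\mu_{\mathcal{M}}$ if and only if $p\ll\nu_{\mathcal{M}}$, and the chain rule for Radon--Nikodym derivatives gives, for every such $p$,
\[
\E_p[E]+\lambda D_{\mathrm{KL}}(p\Vert\bar\mu_{\mathcal{M}}) \;=\; \lambda D_{\mathrm{KL}}(p\Vert\nu_{\mathcal{M}}) \;-\;\lambda\log\frac{Z_{\mathcal{M}}}{V_{\mathcal{M}}}.
\]
Because $E$ is bounded, $\E_p[E]$ is finite for every $p$, so no $\infty-\infty$ ambiguity arises. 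Gibbs' inequality then shows that the objective is at least $-\lambda\log(Z_{\mathcal{M}}/V_{\mathcal{M}})$, with equality if and only if $p=\nu_{\mathcal{M}}$. This gives both existence and uniqueness of the minimizer.

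For \emph{(iii)}, nonemptiness of $\mathcal{M}^\star$ follows from the same compactness argument as in (i). Fix an open set $U\supseteq\mathcal{M}^\star$.
\begin{itemize}
\item The set $K:=\mathcal{M}\setminus U$ is closed in $\mathcal{M}$, hence compact. If $K$ is empty there is nothing to prove. Otherwise $E$ attains its minimum on $K$ at some point outside $\mathcal{M}^\star$, so $\min_K E = E_{\min}+2\delta$ for some $\delta>0$.
\item By continuity, the set $U_\delta:=\{x\in\mathcal{M}:E(x)<E_{\min}+\delta\}$ is open in $\mathcal{M}$ and nonempty, because it contains $\mathcal{M}^\star$. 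A nonempty open subset of a manifold has positive Riemannian volume, so $\mathrm{vol}_{\mathcal{M}}(U_\delta)=:v_\delta>0$.
\item Bounding the numerator on $K$ from above and the partition function from below on $U_\delta$ gives
\[
\nu_{\mathcal{M}}(K)\;\le\;\frac{V_{\mathcal{M}}\,e^{-(E_{\min}+2\delta)/\lambda}}{v_\delta\, e^{-(E_{\min}+\delta)/\lambda}} \;=\;\frac{V_{\mathcal{M}}}{v_\delta}\,e^{-\delta/\lambda}\;\longrightarrow\;0 \quad\text{as }\lambda\to0^+.
\]
\end{itemize}

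The only step that is not routine is the strictly positive gap $\delta$ together with $v_\delta>0$. This requires two facts: compactness of $K$, and that open sets of an embedded manifold carry positive induced volume, since the volume form is locally a smooth positive density in charts. In (ii), the only care needed is the null-set and absolute-continuity bookkeeping, which boundedness of $E$ handles.
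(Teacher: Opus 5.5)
Your proposal is correct and follows essentially the same route as the paper's proof: the same sandwich bound $V_{\mathcal{M}}e^{-E_{\max}/\lambda}\le Z_{\mathcal{M}}\le V_{\mathcal{M}}e^{-E_{\min}/\lambda}$ for (i), and for (ii) the same Radon--Nikodym identity $\E_p[E]+\lambda D_{\mathrm{KL}}(p\Vert\bar\mu_{\mathcal{M}})=\lambda D_{\mathrm{KL}}(p\Vert\nu_{\mathcal{M}})-\lambda\log(Z_{\mathcal{M}}/V_{\mathcal{M}})$ combined with Gibbs' inequality. For (iii) you use the same half-gap sublevel-set estimate; the only differences are that you bound the numerator by $V_{\mathcal{M}}$ instead of $\mathrm{vol}_{\mathcal{M}}(\mathcal{M}\setminus U)$ and treat the empty-$K$ case explicitly, both harmless.
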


We equip $\mathcal{X}$ with the Riemannian metric inherited from the ambient rollout-coordinate Euclidean space and denote the induced volume form by $\mathrm{dvol}_{\mathcal{X}}$. Constrained trajectory optimization then becomes sampling from the intrinsic Gibbs measure on $\mathcal{X}$,
\begin{equation}
    p_{\mathcal{X}}^{\star}(\diff z) = \frac{1}{Z_{\mathcal{X}}}  e^{-\frac{1}{\lambda} \mathcal{J}(\Phi(z))} \mathrm{dvol}_{\mathcal{X}}(z),
    \label{eq: p_S_star}
\end{equation}
which is the instance of \eqref{eq: nu_star_manifold} for $\mathcal{M} = \mathcal{X}$ and $E = \mathcal{J} \circ \Phi$.
Under Asms.~\ref{ass: compact_connected} and~\ref{ass: sampling_access} (Sec.~\ref{subsec: assumptions_inequalities}), Extended R--ASBS applies directly with $c := G$ and manifold energy $E_{\mathcal{X}}(z) = \frac{1}{\lambda} \mathcal{J}(\Phi(z))$; samples $z \sim p_{\mathcal{X}}^\star$ are mapped to feasible trajectories by $\tau = \Phi(z)$.

\parhead{Advantages and limitations.}
Rollout coordinates realize the smallest possible sampling dimension ($Tm$, codimension $R$ only), and dynamic feasibility can never be violated, not even transiently during Newton projections. The price is that both the energy and the constraint map \eqref{eq: G_def} are $T$-fold compositions of the dynamics. Their derivatives chain the stage Jacobians $A_t := \partial_x f_t$, $B_t := \partial_u f_t$: for $s < t$, $\frac{\partial x_t}{\partial u_s}(z) = A_{t-1} A_{t-2} \cdots A_{s+1} B_s$, so gradient magnitudes can grow (or vanish) exponentially with the horizon whenever the linearized dynamics are unstable or stiff \cite{pascanu2013difficulty}.

\subsection{Double-Shooting Coordinates: Dynamics as Manifold Constraints}
\label{subsec: double_shooting}
The complementary formulation samples states and controls \emph{jointly}. The decision variable is the full trajectory \eqref{eq: tau}, and the dynamics enter as explicit equality constraints, defining the dynamics manifold $\mathcal{M}_d := \{\tau \in \R^{n_\tau} : \eqref{eq: discrete_dyn} \text{ holds } \forall t \in \N_{[0:T-1]}\}$.
Adding the initial condition and the path and terminal equality constraints of \eqref{eq: constrained_TO} yields the constrained trajectory manifold
\begin{equation}
\begin{split}
    \mathcal{M}_p := \{&\tau \in \mathcal{M}_d : x_0 = \bar{x}_0, \\ 
    &h_t(x_t, u_t) = 0 \ \forall t, \ h_T(x_T) = 0\}.
\end{split}
\label{eq: Mp_def}
\end{equation}
The next proposition formalizes that $\mathcal{M}_d$ is globally parametrized by the initial state and the control sequence and that the regularity of $\mathcal{M}_p$ is inherited from that of $\mathcal{X}$.

\begin{proposition}[Rollout embedding and regularity of $\mathcal{M}_d$ and $\mathcal{M}_p$]
\label{prop: regularity_Mp}
Let $f_t$, $h_t$ ($t = 0, \dots, T-1$), and $h_T$ be smooth, and extend the rollout map to $\Phi: \R^{n+Tm} \to \R^{n_\tau}$ by treating $x_0$ as free. Then:
\begin{enumerate}
    \item[(i)] $\Phi$ is a smooth embedding with $\Phi(\R^{n+Tm}) = \mathcal{M}_d$. In particular, $\mathcal{M}_d$ is a smooth embedded submanifold of $\R^{n_\tau}$ of dimension $n + Tm$, and no rank assumption is required: the defect constraints are regular everywhere.
    \item[(ii)] If, in addition, $\mathcal{X} \neq \emptyset$ and $\operatorname{rank} J_G(z) = R$ for all $z \in \mathcal{X}$ (the assumptions of Prop.~\ref{prop: regularity_S}), then $\mathcal{M}_p = \Phi(\{\bar{x}_0\} \times \mathcal{X})$ is a smooth embedded submanifold of $\R^{n_\tau}$, diffeomorphic to $\mathcal{X}$, of dimension $Tm - R$.
\end{enumerate}
\end{proposition}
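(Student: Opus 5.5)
The plan is to exhibit $\mathcal{M}_d$ as the graph of a smooth map, which makes every claim in (i) immediate, and then to transfer Prop.~\ref{prop: regularity_S} through this graph structure for (ii).

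\emph{Part (i).} Split $\tau$ into the free coordinates $w := (x_0, u_0, \dots, u_{T-1}) \in \R^{n+Tm}$ and the dependent states $y := (x_1, \dots, x_T) \in \R^{Tn}$. Write $\psi(w)$ for the state sequence generated by the recursion $x_{t+1} = f_t(x_t,u_t)$. Each component of $\psi$ is a finite composition of the smooth $f_t$, so $\psi$ is smooth. Up to reordering coordinates, $\Phi(w) = (w, \psi(w))$.

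By construction, $\tau \in \mathcal{M}_d$ holds if and only if $y = \psi(w)$. This follows by induction on $t$: the dynamics determine $x_{t+1}$ uniquely from $(x_t, u_t)$. Hence $\Phi(\R^{n+Tm}) = \mathcal{M}_d$.

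The map $\Phi$ is injective, since $w$ is recovered as a coordinate projection. It is an immersion, since $J_\Phi = \begin{bmatrix} I \\ J_\psi \end{bmatrix}$ has full column rank. It is a homeomorphism onto its image, because the inverse is the restriction of the continuous linear projection $\pi: \tau \mapsto w$.

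So $\Phi$ is a smooth embedding, and $\mathcal{M}_d$ is a smooth embedded submanifold of dimension $n+Tm$. It is also closed, being the graph of a continuous map. For the remark about regularity, take the defect map $D(\tau) := (x_{t+1} - f_t(x_t,u_t))_{t}$. Its Jacobian with respect to $y$ is block lower-bidiagonal with identity blocks on the diagonal, hence invertible. Therefore $J_D$ has full row rank $Tn$ everywhere, and no assumption is needed.

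\emph{Part (ii).} First, $\{\bar x_0\} \times \mathcal{X}$ is a smooth embedded submanifold of $\R^{n+Tm}$ of dimension $Tm-R$. This holds because $\mathcal{X}$ is one, by Prop.~\ref{prop: regularity_S}, and $z \mapsto (\bar x_0, z)$ is an affine embedding.

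Next, check set equality. If $\tau \in \mathcal{M}_p$, then $\tau = \Phi(\bar x_0, z)$ by part (i), and the path and terminal constraints evaluated along this rollout are exactly $G(z) = 0$, so $z \in \mathcal{X}$. The converse inclusion is the same computation read backwards. Hence $\mathcal{M}_p = \Phi(\{\bar x_0\}\times\mathcal{X})$.

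Finally, the image of an embedded submanifold under a smooth embedding is an embedded submanifold of the image manifold $\mathcal{M}_d$. Since $\mathcal{M}_d$ is itself embedded in $\R^{n_\tau}$, and a composition of embeddings is an embedding, $\mathcal{M}_p$ is embedded in $\R^{n_\tau}$. The map $z \mapsto \Phi(\bar x_0, z)$ is then a diffeomorphism from $\mathcal{X}$ onto $\mathcal{M}_p$, with inverse the restriction of the coordinate projection. This gives dimension $Tm-R$.

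The main point needing care is the topological-embedding property: an injective immersion need not be an embedding in general. The graph structure settles this directly, because the inverse is the global continuous projection $\pi$.
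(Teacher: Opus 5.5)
Your proof is correct and follows essentially the same route as the paper: the coordinate projection $\pi$ (the paper's $P$) is a global left inverse of $\Phi$. This makes $\Phi$ an injective immersion and a homeomorphism onto its image $\mathcal{M}_d$, and part (ii) restricts this embedding to $\{\bar x_0\}\times\mathcal{X}$ after checking $\mathcal{M}_p = \Phi(\{\bar x_0\}\times\mathcal{X})$. One difference in your favour: your block lower-bidiagonal computation of the defect Jacobian actually proves the ``regular everywhere'' remark, which the paper only asserts as a consequence of the embedding argument.
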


\parhead{Advantages and limitations.}
In double-shooting coordinates, the constraint map stacking $(c_t)_t$ and $(h_t, h_T)$ is local in time, i.e., each row involves only variables from adjacent stages. The energy gradient $\nabla_\tau \mathcal{J}$ decomposes stagewise, curing the exploding-gradient pathology. The downside is sampling hardness, since the ambient dimension inflates to $n_\tau$ and the manifold codimension to $(T+1)n + R$, so every step must retract onto a large nonlinear manifold. 

\subsection{Standing Assumptions and Strict Inequalities}
\label{subsec: assumptions_inequalities}
R--ASBS requires compact, path-connected manifolds. Smoothness and embeddedness are guaranteed by Props.~\ref{prop: regularity_S} and~\ref{prop: regularity_Mp}; since $\mathcal{X}$ and $\mathcal{M}_p$ are diffeomorphic, compactness and path-connectedness hold for one iff they hold for the other, but they cannot, in general, be derived from the problem data and are therefore imposed as a standing assumption. 

\begin{assumption}[Compactness and path-connectedness]
\label{ass: compact_connected}
The feasible set $\mathcal{X}$ (equivalently, $\mathcal{M}_p$) is nonempty, compact, and path-connected.
\end{assumption}
Closedness is automatic ($\mathcal{X}$ is the preimage of $\{0\}$ under the continuous map $G$), so compactness reduces to boundedness. Boundedness holds, for instance, when the decision variable ranges over a compact ambient reparametrization, e.g., angles identified on the torus via unit-vector coordinates.

R--ASBS additionally requires an initial law for the intrinsic diffusion \eqref{eq: manifold_controlled_sde}.
\begin{assumption}[Sampling access to a feasible source]
\label{ass: sampling_access}
We have sampling access to a source probability measure $\mu_{\mathcal{M}}$ supported on $\mathcal{X}$ (equivalently, $\mathcal{M}_p$).
\end{assumption}
In practice this requires only an initial trajectory close enough to feasibility for the nearest-point projection $\Pi_\mathcal{M}$ to converge.
Starting from this feasible but suboptimal source, R--ASBS transports mass to $p_{\mathcal{X}}^\star$ (equivalently $p_{{\mathcal{M}}_p}^\star$), whose samples remain feasible and, as $\lambda \to 0^+$, concentrate on the trajectories that are optimal with respect to the soft objectives collected in $\mathcal{J}$ (Thm.~\ref{theo: optimization_inference_manifold}(iii)).

\parhead{Strict inequality constraints.}
Strict path inequalities $g_t(x_t, u_t) < 0$ can be encoded by introducing slack variables $s_t \in \R^{q_t}$ and imposing $g_t(x_t, u_t) + \exp(s_t) = 0$ componentwise. Since $\partial_{s_t}\big(g_t + \exp(s_t)\big) = \operatorname{diag}(\exp(s_t))$ is nonsingular for every finite $s_t$, the augmented constraints are regular in the slack directions, and the joint constraint map retains full row rank whenever the equality block does. Thus, at the level of manifold regularity, strict inequalities reduce to equality constraints on an augmented space.
Observe that this construction represents only the strict interior $g_t < 0$, not the closed set $g_t \le 0$. Boundary-active solutions correspond to $s_t \to -\infty$, which breaks the boundedness required by Asm.~\ref{ass: compact_connected} and causes numerical ill-conditioning. We therefore target problems whose optima lie strictly in the interior of the feasible set, as in Sec.~\ref{subsec: underwater}.

\section{Numerical Experiments}
\label{sec: numerics}
Sec.~\ref{sec: manifold_TO} identifies two regimes: a Euclidean one, in which soft-constrained trajectory optimization is directly amenable to ASBS, and a hard-constrained one, in which the intrinsic target \eqref{eq: nu_star_manifold} is provably well-posed and the R--ASBS hypotheses hold. This section probes their  validity empirically beyond it.
All simulations were implemented in Python using JAX \cite{jax2018github} on a Lambda workstation with an AMD Ryzen Threadripper PRO 5975WX (32 cores) and two NVIDIA Ada6000 GPUs. The source code is available at \href{https://github.com/lamb97/R_ASBS_TO}{Github}.

\subsection{Soft-Constrained Tasks via Euclidean ASBS}
\label{subsec: euclidean_tasks}
\begin{figure}[t]
    \centering
    \includegraphics[width=\linewidth]{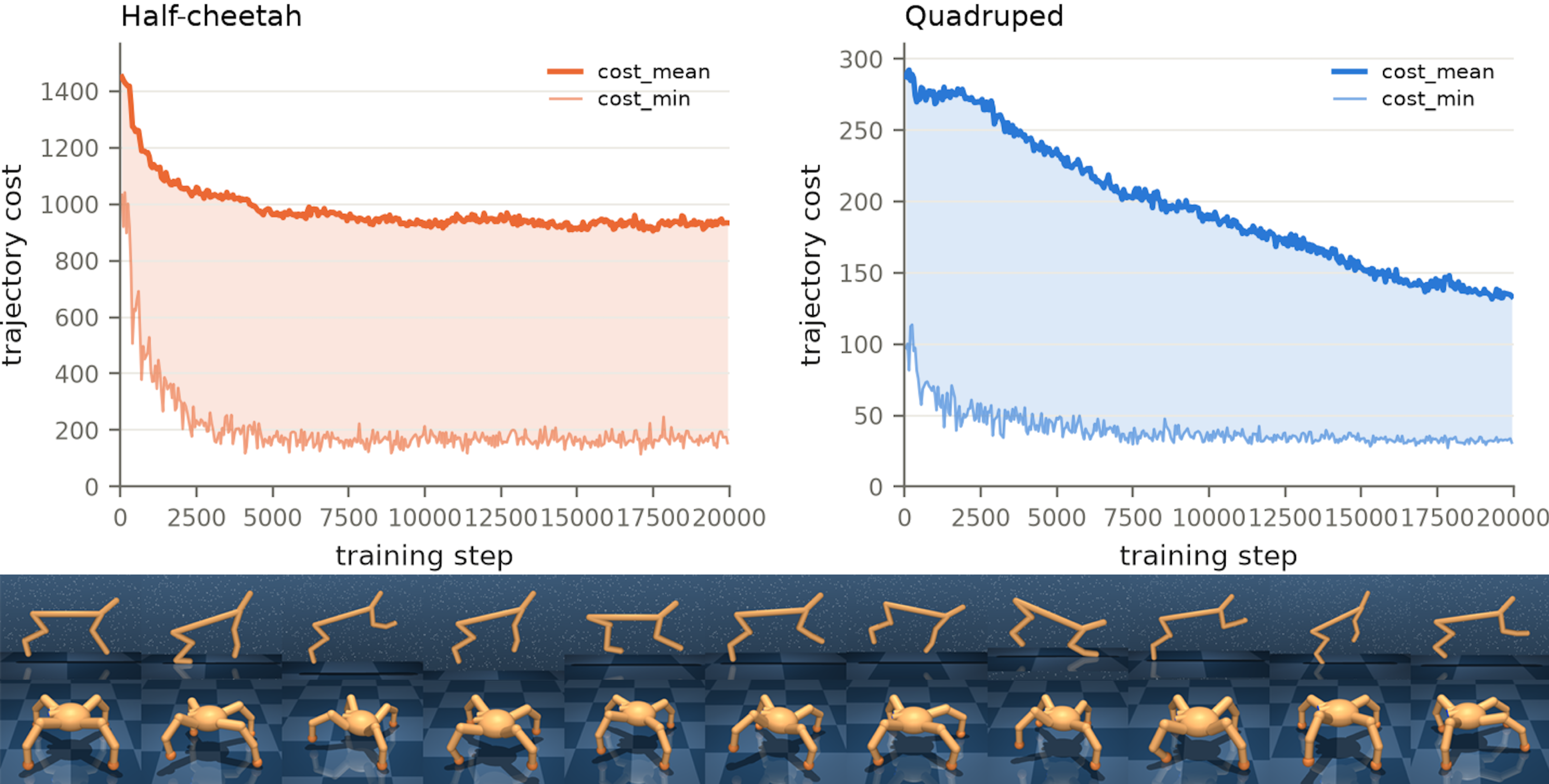}
    \caption{Cost of the sampled trajectories during training: mean over the $2000$ draws at each evaluation (thick) and lowest-cost draw (thin). The bottom rows show time steps from the corresponding minimum-cost trajectories.}
    \label{fig: cost_distribution}
    \vspace{-5mm}
\end{figure}

Both tasks are simulated with the analytical contact engine ComFree \cite{borse2026comfree}, whose compliant contact model keeps every integration step differentiable: the energy gradient $\nabla_z \mathcal{J}$ is obtained by reverse-mode differentiation through the entire rollout, rather than by finite differences or zeroth-order estimation, and agrees with central finite differences across four decades of perturbation size. Fig.~\ref{fig: cost_distribution} reports $2000$ sampled trajectories per evaluation, together with the selected lowest-cost trajectory $z^\star := \argmin_{z^{(i)}} \mathcal{J}\big(z^{(i)}\big)$, $\ i = 1, \dots, 2000$.

\subsubsection{Cheetah Locomotion}
\label{subsubsec: cheetah}
We first consider the planar half-cheetah of the DeepMind Control Suite \cite{tunyasuvunakool2020dmcontrol},
on a horizon of $T = 150$ control knots, each held for $10$ integration steps, the sampled variable is $z \in \R^{900}$ while every rollout resolves $1500$ contact steps. The running cost penalizes the deviation of the forward speed from the benchmark target through its smoothed reward, together with small control-effort and control-smoothness terms, and is accumulated at every integration step.

\subsubsection{Quadruped Locomotion}
\label{subsubsec: quadruped}
The second task is the spatial quadruped of the same suite \cite{tunyasuvunakool2020dmcontrol}, whose $m = 12$ actuators drive four legs that alternate stance and flight.
With the same $T=150$ knots, each now held for $4$ integration steps, the sampled variable is $z \in \R^{1800}$.
The running cost is the benchmark Move objective.

\subsection{Hard Equality Constraints via R--ASBS}
\label{subsec: equality_tasks}

\subsubsection{$2$D PointMaze}
We evaluate the R--ASBS trajectory sampler on a planar PointMaze navigation problem inspired by the OGBench benchmark \cite{park2025ogbench}. A two-dimensional point mass must move from a fixed initial state $x_0 = \bar{x}_0$ to a prescribed terminal location under the dynamics $x_{t+1} = x_t + \Delta t\, u_t$. Two classes of hard equality constraints are imposed: exact terminal arrival, $x_T = x_T^\star$, and constant speed along the path, $\|u_t\|^2 = v_0^2$ for all $t$. 
The maze itself is encoded through the running cost. Fig.~\ref{fig: const_vel_maze} shows $500$ sampled trajectories (blue), together with the selected lowest-cost trajectory $\theta^\star := \argmin_{\theta^{(i)}} \mathcal{J}\big(\Phi(\theta^{(i)})\big)$, $\ i = 1, \dots, 500$.
We compare against constrained nonlinear programming (NLP; \texttt{scipy.optimize.minimize}) and Riemannian gradient descent (RGD) \cite{boumal2023optimization}, both run from the same prior and in double-shooting coordinates.
\begin{figure}[t]
    \centering
    \includegraphics[width=0.72\linewidth]{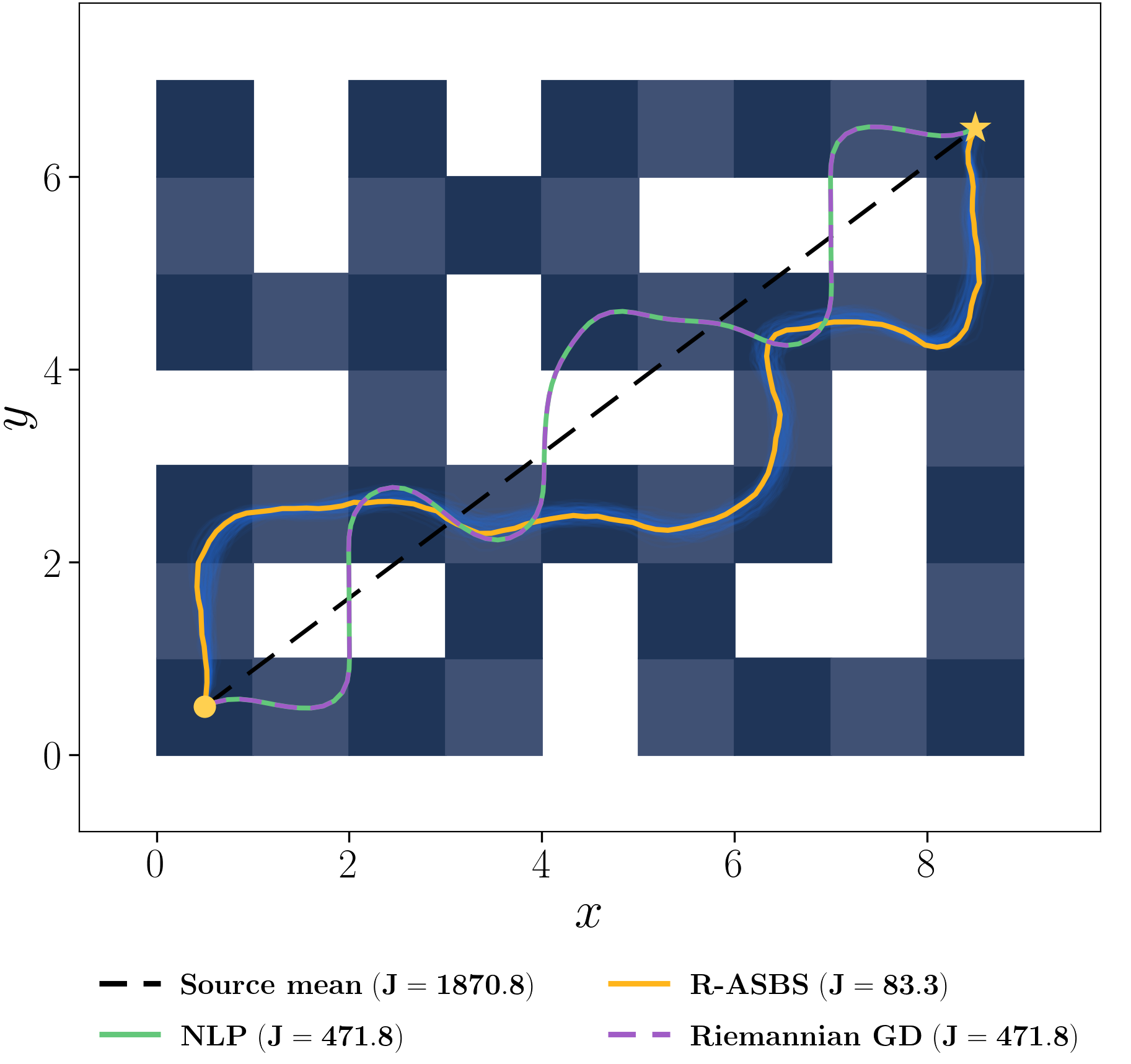}
    \caption{Constant-speed PointMaze, with walls in white. The marks $\textcolor{yellow!50!orange}{\bullet}$ and $\color{yellow!80!black}\star$ denote the initial state and the hard terminal goal.}
    \label{fig: const_vel_maze}
\end{figure}

\begin{figure}[t]
    \centering
    \vspace{-3mm}
    \includegraphics[width=\linewidth]{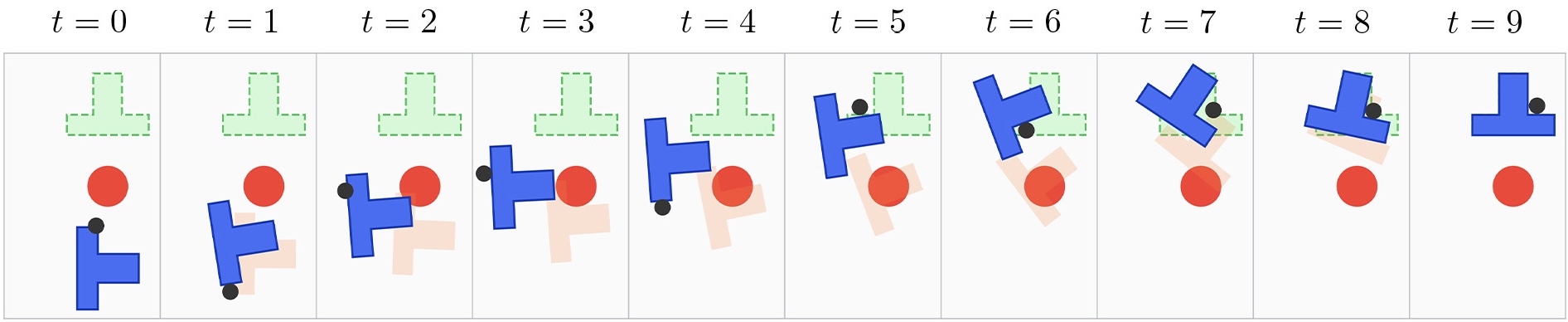}
    \caption{Push-T at successive time steps (prior $\textcolor{white!60!orange}{\blacksquare}$, best $\textcolor{white!30!blue}{\blacksquare}$, obstacle $\textcolor{red}{\bullet}$).}
    \label{fig: push_T}
    \vspace{-5mm}
\end{figure}

\subsubsection{Multi-Drone Cooperative Transport}
\label{subsubsec: multi_drone}
The complexity increases substantially in the scenario of Fig.~\ref{fig: multi_agent}$(a)$, where a team of $M = 3$ unmanned aerial vehicles (UAVs) cooperatively carries a cable-suspended payload with position and velocity $p_t^L, v_t^L \in \R^3$. The payload must move from a starting configuration to a final target, passing through required intermediate waypoints, while avoiding collisions with moving obstacles and between drones, low-altitude flight, and excessive payload swing.

Rather than sampling UAV positions and enforcing the fixed cable lengths $l_i$ as constraints, each cable direction is parametrized by a polar angle $\vartheta_t^i$ and an azimuth angle $\psi_t^i$, so that UAV $i$'s position is reconstructed as $p_t^i = p_t^L + l_i q_t^i(\vartheta_t^i, \psi_t^i)$ with $q_t^i$ the corresponding unit vector. 
The state collects payload position/velocity, cable angles, and angular rates ($n = 18$); the control $u_t \in \R^{3 + 2M}$ commands the payload acceleration and the cable angular accelerations, with magnitude limits enforced via $\tanh$. The sampled variable is $u_{[0:T-1]}$, and the trajectory is propagated through nonlinear time-varying dynamics that couple payload swing, cable geometry, and a position- and time-dependent wind field. The hard equality constraints comprise the full initial and terminal state, and passage of the payload through two waypoints at prescribed times $t_1$ and $t_2$; the remaining specifications enter as soft running costs. Fig.~\ref{fig: multi_agent}$(b)$ reports the selected lowest-cost trajectory out of $2500$ sampled ones. 

\begin{figure*}[t]
    \centering
    \includegraphics[width=0.87\textwidth]{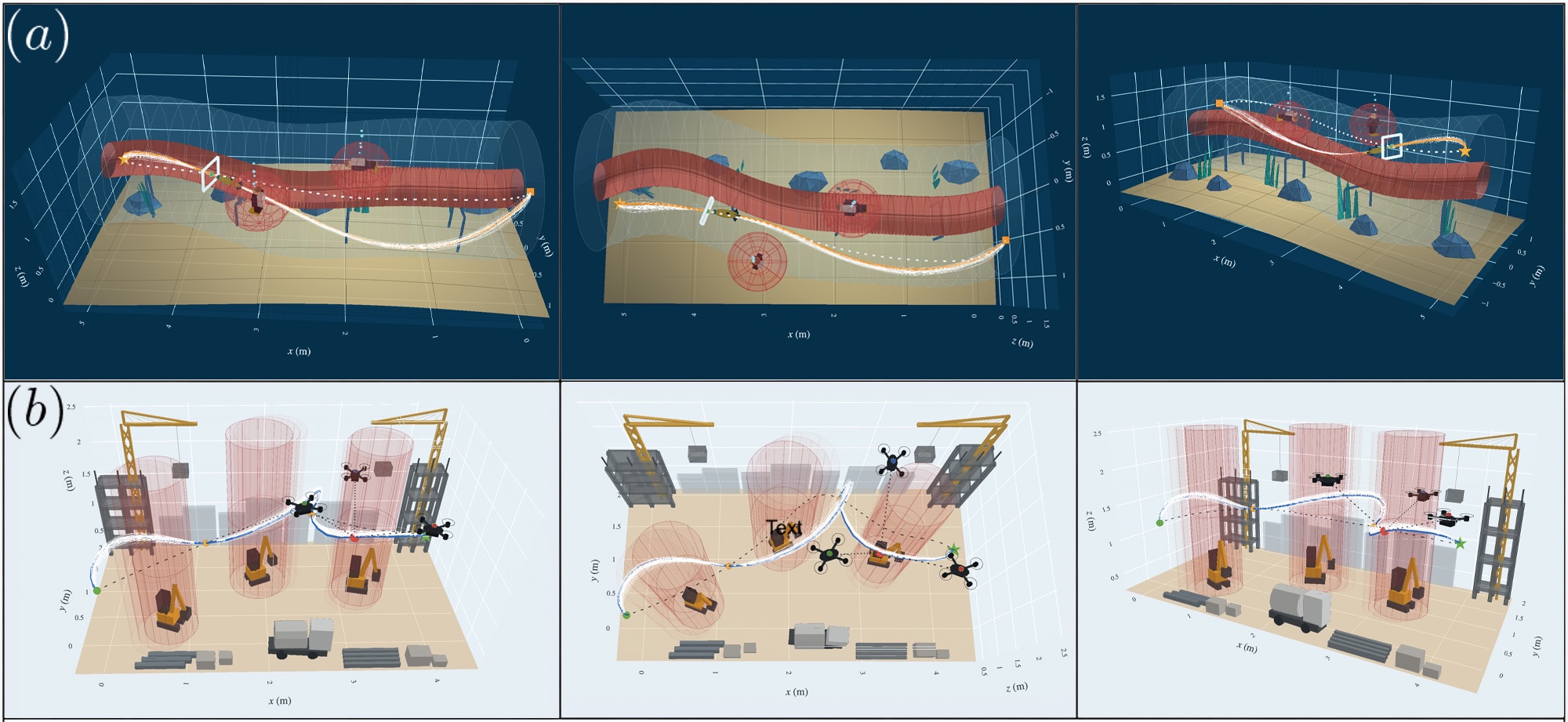}
    \caption{$(a)$ AUV pipeline inspection and $(b)$ Multi-drone cooperative payload transport, each in front, top, and side views. 
    Each panel also shows a batch of sampled feasible trajectories and highlights the mean of the feasible source distribution (dashed line).}
    \label{fig: multi_agent}
    \vspace{-5mm}
\end{figure*}

\subsubsection{Push-T Task}
\label{subsec: pushT}
Inspired by the contact-mode formulation in \cite{kang2025global}, we model the planar motion of a T-shaped slider under prescribed pusher contact.
The state is
$x_t = [b_{x,t}, b_{y,t}, \alpha_t] \in \mathbb{R}^3$,
and each control
$u_t = [e_t, s_t, v_{x,t}, v_{y,t}] \in \mathbb{R}^4$
specifies a boundary segment, a contact location, and a bounded local contact velocity.
We use a soft Gumbel--Softmax relaxation for the segment selector and compute the slider motion using a quasi-static sticking-contact model.
For a horizon of $T=200$, the target pose is imposed through three hard terminal equality constraints, while control smoothness and obstacle avoidance are treated as soft costs (Fig.~\ref{fig: push_T}), which reports the selected lowest-cost trajectory out of $500$ sampled ones.

\subsection{Strict Inequality Constraints}
\label{subsec: underwater}
We next evaluate the slack construction introduced in Sec.~\ref{subsec: assumptions_inequalities}. 
An autonomous underwater vehicle (AUV) inspects a subsea pipeline and must move from a prescribed initial condition to a terminal target while passing through one inspection gate, remaining inside a safe inspection annulus around the pipeline, avoiding a central adverse-flow region, staying above the seafloor, and maintaining a safety distance from two divers working near the pipe (Fig.~\ref{fig: multi_agent}$(b)$).

The AUV state is $x_t = [p_t, v_t]^\T \in \R^6$ (inertial position and water-relative velocity), and the sampled control is the open-loop raw acceleration sequence $u \in \R^{3T}$ with bounded physical acceleration $a_t^{\mathrm{cmd}} = a_{\max} \tanh(u_t)$; the rollout dynamics include a position- and time-dependent ocean current with an adverse pipe-core component, nonlinear hydrodynamic drag, and a small vertical bias. The strict annular inspection region is described by two inequalities,
\begin{align*}
    g_{\mathrm{out}}(p_t) &= d_{\mathrm{pipe}}^2(p_t) - R_{\max}^2 < 0, \\
    g_{\mathrm{in}}(p_t) &= R_{\min}^2 - d_{\mathrm{pipe}}^2(p_t) < 0,
\end{align*}
where $d_{\mathrm{pipe}}(p_t)$ is the distance to the pipe axis; seafloor clearance is imposed through $g_{\mathrm{floor}}(p_t) = h_{\mathrm{floor}}(p_t^x, p_t^y) + h_{\min} - p_t^z < 0$ for a smooth height field $h_{\mathrm{floor}}$, and diver avoidance through $g_{\mathrm{div},k}(p_t) = r_k^2 - \|p_t - o_k\|^2 < 0$ for fixed work zones centered at $o_k$, $k = 1, 2$. Fig.~\ref{fig: multi_agent}$(a)$ reports the selected lowest-cost trajectory out of $2500$ sampled ones. 

\section{Conclusion}
\label{sec: conclusion}
We studied the connection between finite-horizon trajectory optimization and diffusion Schr\"odinger bridge sampling, and subsequently presented a sampling-based framework for hard-constrained instances.
Numerical experiments demonstrated the efficacy of the proposed method.
A limitation, left for future work, is that the computational complexity increases with the time horizon, or more precisely with the number of time steps $T$.

\section*{Acknowledgment}
The authors would like to thank Shucheng Kang for his helpful discussions and comments.
They also express their gratitude to Luigi Tesio for his assistance in designing Fig.~\ref{fig: multi_agent}.

\appendices

\section{Supplementary Materials}
\label{app: proofs}

\begin{proof}[Proof of Thm.~\ref{theo: optimization_inference_manifold}]
Since $\mathcal{M}$ is compact and $E$ is continuous, $E$ attains a finite minimum $E^\star := \min_{\mathcal{M}} E$ and maximum $E^{\max} := \max_{\mathcal{M}} E$. Hence $e^{-E^{\max}/\lambda} \le e^{-E(x)/\lambda} \le e^{-E^\star/\lambda}$ for all $x \in \mathcal{M}$, and integrating against $\mathrm{vol}_{\mathcal{M}}$ yields 
\begin{equation*}
   V_{\mathcal{M}}\, e^{-E^{\max}/\lambda} \le Z_{\mathcal{M}} \le V_{\mathcal{M}}\, e^{-E^\star/\lambda} \implies 0 < Z_{\mathcal{M}} < \infty.
\end{equation*}
Equivalently, $\tilde{Z} := Z_{\mathcal{M}} / V_{\mathcal{M}} = \int_{\mathcal{M}} e^{-E/\lambda}\, \diff\bar{\mu}_{\mathcal{M}} \in (0, \infty)$, and $\nu_{\mathcal{M}}$ in \eqref{eq: nu_star_manifold} has bounded, strictly positive density $\tfrac{\diff \nu_{\mathcal{M}}}{\diff \bar{\mu}_{\mathcal{M}}} = \tilde{Z}^{-1} e^{-E/\lambda}$ with respect to $\bar{\mu}_{\mathcal{M}}$; in particular $\nu_{\mathcal{M}}$ and $\bar{\mu}_{\mathcal{M}}$ are mutually absolutely continuous.

For (ii), observe that $\E_{x \sim p}[E(x)]$ is finite for every $p \in \mathcal{P}(\mathcal{M})$ (as $E$ is bounded), whereas $D_{\mathrm{KL}}(p \Vert \bar{\mu}_{\mathcal{M}}) = +\infty$ unless $p \ll \bar{\mu}_{\mathcal{M}}$; the value $p = \bar{\mu}_{\mathcal{M}}$ attains a finite objective, so it suffices to minimize over $p \ll \bar{\mu}_{\mathcal{M}}$, for which also $p \ll \nu_{\mathcal{M}}$. By the chain rule for Radon--Nikodym derivatives, $\tfrac{\diff p}{\diff \nu_{\mathcal{M}}} = \tfrac{\diff p}{\diff \bar{\mu}_{\mathcal{M}}}\, \tilde{Z}\, e^{E/\lambda}$, hence
\begin{align*}
    D_{\mathrm{KL}}(p \Vert \nu_{\mathcal{M}})
    &= \int_{\mathcal{M}} \log \frac{\diff p}{\diff \nu_{\mathcal{M}}}\, \diff p \\
    &= \int_{\mathcal{M}} \Big( \log \tfrac{\diff p}{\diff \bar{\mu}_{\mathcal{M}}} + \log \tilde{Z} + \tfrac{1}{\lambda} E \Big)\, \diff p \\
    &= D_{\mathrm{KL}}(p \Vert \bar{\mu}_{\mathcal{M}}) + \log \tilde{Z} + \tfrac{1}{\lambda}\, \E_{p}[E].
\end{align*}
Multiplying by $\lambda$ and rearranging expresses the objective of \eqref{eq: KL_manifold} as $\E_{p}[E] + \lambda D_{\mathrm{KL}}(p \Vert \bar{\mu}_{\mathcal{M}}) = \lambda D_{\mathrm{KL}}(p \Vert \nu_{\mathcal{M}}) - \lambda \log \tilde{Z}$.

Because $D_{\mathrm{KL}}(p \Vert \nu_{\mathcal{M}}) \ge 0$ with equality iff $p = \nu_{\mathcal{M}}$, the objective is minimized uniquely at $p = \nu_{\mathcal{M}}$, with value $-\lambda \log \tilde{Z} = -\lambda \log(Z_{\mathcal{M}} / V_{\mathcal{M}})$, proving (ii).

For (iii), fix an open $U \supseteq \mathcal{M}^\star$. As $\mathcal{M} \setminus U$ is compact and $E > E^\star$ on it, $E_U := \min_{\mathcal{M} \setminus U} E > E^\star$; set $\delta := (E_U - E^\star)/2 > 0$. The sublevel set $B := \{x \in \mathcal{M} : E(x) < E^\star + \delta\}$ is open and nonempty (it contains $\mathcal{M}^\star$), hence $\mathrm{vol}_{\mathcal{M}}(B) > 0$. Using $e^{-E/\lambda} \le e^{-E_U/\lambda}$ on $\mathcal{M} \setminus U$, and $Z_{\mathcal{M}} \ge \int_B e^{-E/\lambda}\, \diff\mathrm{vol}_{\mathcal{M}} \ge \mathrm{vol}_{\mathcal{M}}(B)\, e^{-(E^\star + \delta)/\lambda}$,
\begin{equation*}
\begin{split}
    \nu_{\mathcal{M}}(\mathcal{M} \setminus U) &= \frac{\int_{\mathcal{M} \setminus U} e^{-E/\lambda}\, \diff\mathrm{vol}_{\mathcal{M}}}{Z_{\mathcal{M}}} \\
    &\le \frac{\mathrm{vol}_{\mathcal{M}}(\mathcal{M} \setminus U)}{\mathrm{vol}_{\mathcal{M}}(B)}\, e^{-\delta/\lambda} \xrightarrow[\lambda \to 0^+]{} 0,
\end{split}
\end{equation*}
since $E_U - (E^\star + \delta) = \delta > 0$. This proves (iii).
\end{proof}

\begin{proof}[Proof of Prop.~\ref{prop: regularity_S}]
Since each $f_t$ is smooth, the recursively generated states $x_t(z)$ are finite compositions of smooth maps and hence smooth in $z$; since $h_t, h_T$ are smooth, $G$ in \eqref{eq: G_def} is smooth. By assumption, $\operatorname{rank} J_G(z) = R$ for every $z \in G^{-1}(0)$, so $0$ is a regular value of $G$. The regular value theorem then implies that $\mathcal{X} = G^{-1}(0)$ is a smooth embedded submanifold of $\R^{Tm}$ of codimension $R$, i.e., of dimension $Tm - R$.
\end{proof}

\begin{proof}[Proof of Prop.~\ref{prop: regularity_Mp}]
$(i)$ Given $z = (x_0, u_0, \dots, u_{T-1}) \in \R^{n+Tm}$, the state sequence is generated recursively by $x_{t+1} = f_t(x_t, u_t)$, $t = 0, \dots, T-1$. This defines the rollout map $\Phi: \R^{n+Tm} \to \R^{n_\tau}$,
\begin{equation*}
\Phi(z) = (x_0, u_0, x_1, u_1, \dots, x_T).
\end{equation*}
Since each $f_t$ is smooth, the recursively generated states $x_1(z), \dots, x_T(z)$ are smooth\footnote{Notice that the independent variables $x_0$ and $u_t$ are linear projections of $z$ (e.g., $x_0(z) = P_{x_0} z = [I_{n \times n} \ 0_{n \times Tm}] z$). The intermediate states $x_t(z)$ are thus finite compositions of smooth transition maps. Differentiability is understood w.r.t.\ $z$.}, and therefore $\Phi$ is smooth. Moreover, let $P: \R^{n_\tau} \to \R^{n+Tm}$ be the projection that extracts $(x_0, u_0, \dots, u_{T-1})$ from a full trajectory. Then by definition $P \circ \Phi = \mathrm{Id}_{\R^{n+Tm}}$. Hence the Fr\'echet derivative satisfies
\[
D(P \circ \Phi)(z) = DP(\Phi(z)) D\Phi(z) = I_{n+Tm},
\]
so $D\Phi(z)$ possesses a left inverse, implying $D\Phi(z)$ is injective for every $z$; $\Phi$ is an injective immersion. Also, $P$ restricted to $\Phi(\R^{n+Tm})$ is a continuous inverse of $\Phi$, so $\Phi$ is a homeomorphism onto its image (with the subspace topology) and hence a smooth embedding. Its image is exactly $\mathcal{M}_d$: any $\tau \in \mathcal{M}_d$ satisfies the recursion, so $\tau = \Phi(P(\tau))$, and conversely every rollout satisfies the dynamics. In particular, $\Phi$ is a diffeomorphism from $\R^{n+Tm}$ onto $\mathcal{M}_d$, so $\mathcal{M}_d$ is a smooth embedded submanifold of $\R^{n_\tau}$ of dimension $n + Tm$; the argument imposes no rank condition on the dynamics, whose defect constraints are thus regular everywhere. This proves $(i)$.

$(ii)$ By Prop.~\ref{prop: regularity_S}, $\mathcal{X} = G^{-1}(0)$ is a smooth embedded submanifold of $\R^{Tm}$ of dimension $Tm - R$, so $\{\bar{x}_0\} \times \mathcal{X}$ is a smooth embedded submanifold of $\R^{n+Tm}$ of the same dimension, diffeomorphic to $\mathcal{X}$. For $z = (\bar{x}_0, u)$ with $u \in \R^{Tm}$, the trajectory $\Phi(z)$ satisfies the dynamics and the initial condition by construction, and $u \in \mathcal{X}$ is equivalent to $\Phi(z)$ satisfying all path and terminal equality constraints; therefore $\mathcal{M}_p = \Phi(\{\bar{x}_0\} \times \mathcal{X})$. Since $\Phi$ is an embedding, its restriction to the embedded submanifold $\{\bar{x}_0\} \times \mathcal{X}$ is again an embedding, so $\mathcal{M}_p$ is a smooth embedded submanifold of $\R^{n_\tau}$, diffeomorphic to $\mathcal{X}$, with
\begin{equation*}
\dim \mathcal{M}_p = \dim \mathcal{X} = Tm - R = Tm - \sum_{t=0}^{T-1} r_t - r_T. \qedhere
\end{equation*}
\end{proof}

\begin{corollary}[A stagewise sufficient condition for regularity]
\label{cor: sufficient_regular_terminal}
Let $f_t$, $h_t$ ($t = 0, \dots, T-1$), and $h_T$ be smooth and fix
$x_0 = \bar{x}_0$ as in Sec.~\ref{subsec: rollout_coordinates}. Let
$G_p: \R^{Tm} \to \R^{R_p}$, $R_p := \sum_{t=0}^{T-1} r_t$, collect the
path-constraint blocks of $G$ in \eqref{eq: G_def}, i.e., all rows
except the terminal block $h_T(x_T(z))$.
Assume $r_t \le m$ and $\operatorname{rank} \partial_u h_t(x_t, u_t) = r_t$ for every $t = 0, \dots, T-1$ and every rollout satisfying the path constraints. Let $\mathcal{X}_p := G_p^{-1}(0) \subset \R^{Tm}$ be the reduced path-constrained rollout set, and define the terminal reduced map $K: \mathcal{X}_p \to \R^{r_T}, K(z) = h_T(x_T(z))$.

Assume $\operatorname{rank} \diff K(z)\big|_{T_z \mathcal{X}_p} = r_T$ for all $z \in \mathcal{X}_p$ such that $K(z) = 0$. Then $\operatorname{rank} J_G(z) = R$ for all $z \in \mathcal{X}$, i.e., the rank assumption of Prop.~\ref{prop: regularity_S} holds; if moreover $\mathcal{X} \neq \emptyset$, then $\mathcal{X}$ and $\mathcal{M}_p$ are smooth embedded submanifolds (Props.~\ref{prop: regularity_S} and~\ref{prop: regularity_Mp}$(ii)$) of dimension $\dim \mathcal{X} = \dim \mathcal{M}_p = Tm - R_p - r_T$.
\end{corollary}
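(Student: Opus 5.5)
The plan is to exploit the block-triangular structure of $J_G$ induced by causality of the rollout. The argument has two steps: first show that $\mathcal{X}_p$ is a smooth embedded submanifold of dimension $Tm - R_p$, then combine its tangent space with the terminal rank hypothesis to obtain full rank of $J_G$ on $\mathcal{X}$. Prop.~\ref{prop: regularity_S} and Prop.~\ref{prop: regularity_Mp}$(ii)$ then give the final conclusions directly.

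\emph{Step 1 (path block).} Order the rows of $G_p$ by stage and the columns by $(u_0,\dots,u_{T-1})$. The row block $h_t(x_t(z),u_t)$ depends only on $u_0,\dots,u_t$, since $x_t$ depends only on $u_{0:t-1}$. Hence $J_{G_p}(z)$ is block lower triangular, with diagonal blocks $\partial_u h_t(x_t,u_t) \in \R^{r_t\times m}$. At any $z \in \mathcal{X}_p$ the rollout satisfies the path constraints, so each diagonal block has full row rank $r_t$ by hypothesis.

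To get full row rank of $J_{G_p}$, suppose $\sum_t w_t^\T (\text{row block } t) = 0$. Look at the columns of $u_{T-1}$. Only block $T-1$ involves them, which gives $w_{T-1}^\T \partial_u h_{T-1} = 0$ and hence $w_{T-1}=0$. Proceeding backwards in $t$ yields $w_t = 0$ for all $t$. Thus $J_{G_p}$ has rank $R_p$ on $\mathcal{X}_p$.

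If $\mathcal{X}_p$ is nonempty, the regular value theorem (as in the proof of Prop.~\ref{prop: regularity_S}) makes it an embedded submanifold with $T_z\mathcal{X}_p = \ker J_{G_p}(z)$. If it is empty, then $\mathcal{X}=\emptyset$ and the claim is vacuous.

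\emph{Step 2 (terminal block).} Let $z \in \mathcal{X}$. Then $z\in\mathcal{X}_p$ and $K(z)=0$. Write
\[
J_G = \begin{bmatrix} J_{G_p} \\ J_{h_T\circ x_T}\end{bmatrix}.
\]
Suppose $w_p^\T J_{G_p} + w_T^\T J_{h_T\circ x_T} = 0$. Restrict this functional to $\ker J_{G_p}(z) = T_z\mathcal{X}_p$. There the first term vanishes, leaving $w_T^\T\, \diff K(z)|_{T_z\mathcal{X}_p} = 0$, using that $\diff K(z)$ is the restriction of $J_{h_T\circ x_T}(z)$ to the tangent space. The terminal rank hypothesis then forces $w_T = 0$, so $w_p^\T J_{G_p} = 0$, and Step 1 gives $w_p=0$. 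Therefore $\operatorname{rank} J_G(z) = R_p + r_T = R$.

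The dimension formula follows from Props.~\ref{prop: regularity_S} and~\ref{prop: regularity_Mp}$(ii)$.

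The only delicate point is bookkeeping in Step 1. The hypothesis on $\partial_u h_t$ is required only along rollouts satisfying the path constraints. The argument must therefore be run pointwise at $z\in\mathcal{X}_p$, and the regular value theorem applied there, rather than claiming regularity of $G_p$ on all of $\R^{Tm}$. With that care, the triangular back-substitution is routine.
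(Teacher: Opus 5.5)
Your proposal is correct and follows the paper's proof essentially step for step. Like the paper, you use the block lower-triangular structure of $J_{G_p}$ to get $\mathcal{X}_p$ as a regular level set with $T_z\mathcal{X}_p = \ker J_{G_p}(z)$, and then apply the same linear-dependence argument on $\ker J_{G_p}(z)$ to eliminate the terminal multiplier first and the path multiplier second. The only cosmetic difference is how full row rank of $J_{G_p}$ is shown: you use backward elimination over the columns of $u_{T-1}, u_{T-2}, \dots$, while the paper extracts an invertible $R_p \times R_p$ block-triangular submatrix from the diagonal blocks; the two are equivalent.
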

\begin{proof}
For each $t = 0, \dots, T-1$, define $G_{p,t}(z) = h_t(x_t(z), u_t)$. Since $x_t(z)$ depends only on $x_0, u_0, \dots, u_{t-1}$, it does not depend on the current control $u_t$. Therefore, $\frac{\partial G_{p,t}}{\partial u_t} = \partial_u h_t(x_t, u_t)$.
Moreover, for $j > t$, $\frac{\partial G_{p,t}}{\partial u_j} = 0$. Hence the Jacobian of $G_p$ with respect to the control variables $(u_0, \dots, u_{T-1})$ has block lower-triangular structure:
\[
J_{G_p}
=
\begin{bmatrix}
\partial_u h_0 & 0 & \cdots & 0\\
* & \partial_u h_1 & \cdots & 0\\
\vdots & \vdots & \ddots & \vdots\\
* & * & \cdots & \partial_u h_{T-1}
\end{bmatrix}.
\]
By assumption, each diagonal block $\partial_u h_t$ has full row rank $r_t$. Selecting $r_t$ linearly independent columns from each diagonal block gives an invertible $R_p \times R_p$ lower-triangular submatrix. Therefore, $\operatorname{rank} J_{G_p}(z) = R_p$ on $G_p^{-1}(0)$. Thus $0$ is a regular value of $G_p$, and the regular value theorem implies that $\mathcal{X}_p = G_p^{-1}(0)$ is a smooth embedded submanifold of $\R^{Tm}$ with codimension $R_p$ and tangent spaces $T_z \mathcal{X}_p = \ker J_{G_p}(z)$.

Now fix $z \in \mathcal{X} = \{z \in \mathcal{X}_p : K(z) = 0\}$ and note that $\diff K(z)$ is the restriction of the ambient Jacobian $J_{h_T \circ x_T}(z)$ to $T_z \mathcal{X}_p$. Suppose there is a linear dependence $v^\T J_{G_p}(z) + w^\T J_{h_T \circ x_T}(z) = 0$ for some $v \in \R^{R_p}$ and $w \in \R^{r_T}$; we must show that $v = 0$ and $w = 0$. Applying this row vector to any $\xi \in T_z \mathcal{X}_p = \ker J_{G_p}(z)$ yields $w^\T \diff K(z)\, \xi = 0$ for all such $\xi$; since $\diff K(z)\vert_{T_z \mathcal{X}_p}$ is surjective by assumption, $w = 0$, and then $v = 0$ because $J_{G_p}(z)$ has full row rank. Hence the stacked Jacobian $J_G(z) = \big[J_{G_p}(z)^\T \ \ J_{h_T \circ x_T}(z)^\T\big]^\T$ has full row rank $R = R_p + r_T$ for every $z \in \mathcal{X}$, which is precisely the rank assumption of Prop.~\ref{prop: regularity_S}. If $\mathcal{X} \neq \emptyset$, Prop.~\ref{prop: regularity_S} then gives that $\mathcal{X}$ is a smooth embedded submanifold of $\R^{Tm}$ of dimension $Tm - R_p - r_T$, and Prop.~\ref{prop: regularity_Mp}$(ii)$ transports the statement to $\mathcal{M}_p = \Phi(\{\bar{x}_0\} \times \mathcal{X})$, of the same dimension. 
\end{proof}

Cor.~\ref{cor: sufficient_regular_terminal} replaces the single coupled rank condition of Prop.~\ref{prop: regularity_S} with $T$ stagewise checks on the small matrices $\partial_u h_t(x_t, u_t) \in \R^{r_t \times m}$. The terminal condition remains genuinely nontrivial, since after the path constraints are satisfied, the surviving feasible directions must still be able to move the terminal constraint in all $r_T$ independent directions.

\section{Algorithm Structures}
\label{app: algorithms}
\begin{algorithm}[H]
\caption{Adjoint Schr\"odinger Bridge Sampling}
\label{alg: asbs}
\begin{algorithmic}[1]
\Require energy $E$, source $\mu$, schedule $\sigma_t$ with $\bar{\sigma}^2 := \int_0^1 \sigma_t^2 \diff t$, time grid $0 = t_0 < \dots < t_N = 1$, epochs per stage $(K_u, K_h)$, buffer size $B$, updates per epoch $L$, step size $\eta$, clipping radius $\gamma$
\Statex \textsc{Rollout}$(u, X_0)$: Euler--Maruyama integration of \eqref{eq: flat_controlled_sde} over the grid, returning $X_1$
\State initialize $\theta$, and $\psi$ such that $h^{\psi} \equiv 0$
\For{stage $= 1, 2, \dots$}
    \Statex \hskip\algorithmicindent \emph{Adjoint matching}: fit $u^{\theta}$ against $h^{\psi}$
    \For{$K_u$ epochs}
        \State $\mathcal{D} \gets \emptyset$
        \While{$\vert \mathcal{D} \vert < B$}
            \State $X_0 \sim \mu$, \ $X_1 \gets$ \textsc{Rollout}$(\bar{u}^{\theta}, X_0)$
            \State $a \gets \mathrm{clip}_{\gamma}\big(\nabla E(X_1)\big) + h^{\psi}(X_1)$ \label{line: adjoint}
            \State $\mathcal{D} \gets \mathcal{D} \cup \{(X_0, X_1, a)\}$
        \EndWhile
        \For{$L$ updates}
            \State draw $(X_0, X_1, a) \sim \mathcal{D}$, \ $t \sim \mathcal{U}[0, 1]$
            \State $X_t \sim p_{t \vert 0, 1}^{\mathrm{base}}(\cdot \vert X_0, X_1)$
            \State $\theta \gets \theta - \eta \nabla_{\theta} \Vert u_t^{\theta}(X_t) + \sigma_t a \Vert^2$ \label{line: am_update}
        \EndFor
    \EndFor
    \Statex \hskip\algorithmicindent \emph{Corrector matching}: fit $h^{\psi}$ against $u^{\theta}$
    \For{$K_h$ epochs}
        \State $\mathcal{D} \gets \emptyset$
        \While{$\vert \mathcal{D} \vert < B$}
            \State $X_0 \sim \mu$, \ $X_1 \gets$ \textsc{Rollout}$(u^{\theta}, X_0)$
            \State $s \gets (X_1 - X_0) / \bar{\sigma}^2$
            \State $\mathcal{D} \gets \mathcal{D} \cup \{(X_1, s)\}$
        \EndWhile
        \For{$L$ updates}
            \State draw $(X_1, s) \sim \mathcal{D}$
            \State $\psi \gets \psi - \eta \nabla_{\psi} \Vert h^{\psi}(X_1) + s \Vert^2$
        \EndFor
    \EndFor
\EndFor
\State \Return $u^{\theta}$ \Comment{$X_1 \sim \nu$ under \eqref{eq: flat_controlled_sde}}
\end{algorithmic}
\end{algorithm}
Algs.~\ref{alg: asbs} and~\ref{alg: rasbs} state the Euclidean sampler of Sec.~\ref{subsec: euclidean_ASBS} and the manifold sampler of Sec.~\ref{subsec: riemannian_ASBS}, each for a generic energy $E$ and source $\mu$. In Alg.~\ref{alg: asbs}, the two regressions \eqref{eq: ASBS_u_star} and \eqref{eq: euclidean_corrector} are run in alternation, each for a fixed number of epochs. Within an epoch, endpoints are simulated once into a buffer and reused across gradient steps, so the number of energy-gradient evaluations is set by the buffer size rather than by the number of parameter updates. 


\begin{algorithm}[H]
\caption{Extended R--ASBS}
\label{alg: rasbs}
\begin{algorithmic}[1]
\Require constraint map $c$ with tangent projector $P_x$ and retraction $\Pi_{\mathcal{M}}$, energy $E$, nominal point $z_{\mathrm{nom}}$ with $c(z_{\mathrm{nom}}) \approx 0$, source covariance $\Sigma$ and width $\kappa$, noise level $\sigma$, grid $t_j = j \Delta t$ with $\Delta t = 1/N$, epochs $K$, batch size $B$, step sizes $(\eta_u, \eta_h)$, number of final samples $M$
\Statex All quantities are batched over $B$ i.i.d.\ copies; losses are batch averages
\State initialize $\theta$, $\psi$
\For{$K$ epochs}
    \State $X_{t_0} \gets \textsc{FeasibleSource}(z_{\mathrm{nom}}, \Sigma, \kappa)$ \label{line: r_source}
    \For{$j = 0, \dots, N-1$} \Comment{Euler--Maruyama}
        \State $\zeta_j \sim \mathcal{N}(0, I)$
        \State $v_j \gets P_{X_{t_j}} \big( \sigma \Delta t\, u^{\theta}_{t_j}(X_{t_j}) + \sigma \sqrt{\Delta t}\, \zeta_j \big)$
        \State $X_{t_{j+1}} \gets \Pi_{\mathcal{M}}(X_{t_j} + v_j)$ \label{line: r_step}
    \EndFor
    \State $a_{t_N} \gets P_{X_{t_N}} \big( \nabla E(X_{t_N}) + h^{\psi}(X_{t_N}) \big)$ \label{line: r_adjoint}
    \For{$j = N-1, \dots, 0$} \Comment{PAT}
        \State $a_{t_j} \gets P_{X_{t_j}} a_{t_{j+1}}$ \label{line: r_pat}
    \EndFor
    \State $\ell_u(\theta) \gets \frac{1}{N} \sum_{j=0}^{N-1} \big\Vert P_{X_{t_j}} u^{\theta}_{t_j}(X_{t_j}) + \sigma a_{t_j} \big\Vert^2$
    \State $\theta \gets \theta - \eta_u \nabla_{\theta} \ell_u(\theta)$ \label{line: r_am}
    \State $X_1' \gets$ lines~\ref{line: r_source}--\ref{line: r_step} from $X_{t_0}$ with updated $\theta$
    \State $s \gets - P_{X_1'} (X_1' - X_{t_0}) / \sigma^2$ \label{line: r_chord}
    \State $\psi \gets \psi - \eta_h \nabla_{\psi} \Vert P_{X_1'} h^{\psi}(X_1') - s \Vert^2$
\EndFor
\State draw $X_1^{(i)}$, $i = 1, \dots, M$, by lines~\ref{line: r_source}--\ref{line: r_step} under $u^{\theta}$
\State \Return $\{X_1^{(i)}\}_{i=1}^{M}$ and $X_1^{\star} = \argmin_{i} \mathcal{J}\big(X_1^{(i)}\big)$
\Statex
\Procedure{FeasibleSource}{$z_{\mathrm{nom}}, \Sigma, \kappa$}
    \State $\varepsilon \sim \mathcal{N}(0, I)$
    \State $\xi \gets z_{\mathrm{nom}} + \kappa\, P_{z_{\mathrm{nom}}} \Sigma^{1/2} \varepsilon$ \Comment{in $T_{z_{\mathrm{nom}}} \mathcal{M}$} \label{line: r_perturb}
    \State \Return $\Pi_{\mathcal{M}}(\xi)$ \Comment{Newton retraction} \label{line: r_retract}
\EndProcedure
\end{algorithmic}
\end{algorithm}

Following the initialization of \cite{liu2023adjoint}, the corrector is initialized at $h^{\psi} \equiv 0$, so that the first adjoint stage regresses on $\nabla E$ alone; the terminal gradient is norm-clipped at a radius $\gamma$, which bounds the regression targets when $\nabla E$ is heavy-tailed without altering the stationary target.

Here $\bar{u} = \texttt{stopgrad}(u)$ as in \eqref{eq: ASBS_u_star}: buffers are populated under the current control without differentiating through the simulation, so no gradient traverses the sampling SDE.



The source $\mu_{\mathcal{M}}$ of Asm.~\ref{ass: sampling_access} is the part of the procedure that cannot be delegated to the sampler, and \textsc{FeasibleSource} makes its construction explicit. It starts from a nominal point $z_{\mathrm{nom}}$ that satisfies the constraints only approximately, perturbs it in the tangent space at $z_{\mathrm{nom}}$ with a Gaussian of covariance $\kappa^2 \Sigma$ (line~\ref{line: r_perturb}), and retracts the result onto $\mathcal{M}$ (line~\ref{line: r_retract}). Projecting the perturbation onto $T_{z_{\mathrm{nom}}} \mathcal{M}$ leaves the retraction only the second-order residual to remove, and $\kappa$ sets the width of the source, i.e., the size of the residual the Newton iterations must absorb. No optimization problem is solved to obtain $z_{\mathrm{nom}}$: in rollout coordinates the dynamics hold by construction, so it suffices to interpolate a reference path through the prescribed waypoints and terminal state, invert the nominal dynamics along it to obtain a control sequence, and, when strict inequalities are encoded by slacks (Sec.~\ref{subsec: assumptions_inequalities}), close the slacks in closed form as $s_t = \log(-g_t)$ evaluated on the nominal rollout. Samples whose retraction does not converge within the iteration budget are not rejected; their constraint residual is monitored instead, and in all instances of Sec.~\ref{sec: numerics} it stays at the retraction tolerance. The same pair $(z_{\mathrm{nom}}, \Sigma)$ also defines the Gaussian term of the energy $E$ in our implementation, so that the source doubles as a prior centered on the nominal trajectory.


\bibliographystyle{IEEEtran}
\bibliography{refs}

\end{document}